\numberwithin{equation}{section}
\theoremstyle{plain}
\newtheorem{theorem}{Theorem}
\newtheorem{lemma}[theorem]{Lemma}
\begin{document}

\title[Optimal Dividend in the Dual Risk Model]{\textbf{Optimal Dividends in the Dual Risk Model}\\ 
	Under a Stochastic Interest Rate}

\author{Zailei Cheng}
\address
{Department of Mathematics \newline
\indent Florida State University \newline
\indent 1017 Academic Way \newline
\indent Tallahassee, FL-32306 \newline
\indent United States of America}
\email{
zcheng@math.fsu.edu}

\date{\today 
}
\keywords{dual risk model, optimal dividends, stochastic interest rate, Barrier strategy, Threshold strategy}

\begin{abstract}
Optimal dividend strategy in dual risk model is well studied in the literatures. But to the best of our knowledge, all the previous works assumes  deterministic interest rate. In this paper, we study the optimal dividends strategy in dual risk model, under a stochastic interest rate, assuming the discounting factor follows a geometric Brownian motion or exponential L\'evy process. We will show that closed form solutions can be obtained.
\end{abstract}

\maketitle

\section{Introduction}

In a classical risk model with dividend payment, the surplus of an insurance company can be written as:
\begin{equation*}
X_t^D=x+ct-S_t-D_t
\tag{1.1}\end{equation*}

Here $x$ is the initial surplus, $c>0$ is the premium rate, $S_t=\sum_{i=1}^{N_t}Y_i$ is a compound Poisson process, which can be interpreted as the sum of claims. $N_t$ is a Poisson process with rate $\lambda>0$, $Y_i$ are i.i.d. random variables with p.d.f. $p(x)$. $\{D_t\}$ is a dividend process. 

On the other hand, dual risk model \cite{Avanzi} is related to the wealth process of companies like petroleum companies and high tech companies. The surplus in this case can be written as:
\begin{equation*}
X_t^D=x-ct+S_t-D_t
\tag{1.2}\end{equation*}

Here $x$ is the initial surplus, $c>0$ is the rate of expenses, $S_t=\sum_{i=1}^{N_t}Y_i$ is a compound Poisson process, which can be interpreted as the value of future gains from an invention or discovery. $\{D_t\}$ is a dividend rate process.

The problem of optimal dividend was proposed by Bruno De Finetti in 1957. He suggested that a company would seek to find a strategy in order to maximize the accumulated value of expected discounted dividends up to the ruin time.

Many papers on optimal dividend problem in dual risk model have been published in recent years. Avanzi \cite{Avanzi} applied the barrier strategy to dual risk model and obtained the optimal dividend strategy. D.Yao \cite{Yao} worked on optimal dividend problem by constructing two categories of suboptimal models in dual risk model, one is the ordinary dual model without issuance of equity, the other one assumes that, by issuing new equity, the company never goes bankrupt. Cheung and Drekic \cite{Cheung} studied dividend moments in the dual risk model by  deriving the integro-differential equations for the moments of the total discounted dividends as well as the Laplace transform of the time of ruin. D.Peng \cite{Peng} considered the dual risk model with exponentially distributed observation time and constant dividend barrier strategy. A very recent work by Fahim and Zhu \cite{FZ} worked on asymptotic analysis for optimal dividends in dual risk model.

In this paper, we assume that in the compound Poisson process, the size of the jump $Y_i$ follows the exponential distribution, i.e. $p(x)=\beta e^{-\beta x},\beta>0$. 

The accumulated value of expected discounted dividends up to the ruin time becomes:
\begin{equation*}
V^D(x)=\mathbb{E}\left[\int_{0}^{\tau^D}e^{-\delta t}dD_t\right]
\tag{1.3}\end{equation*}

$\delta$ is a deterministic interest rate and $e^{-\delta t}$ is called the discounting factor, where $\tau^D=\inf\{t:X_t^D<0\}$ is the ruin time. People seek to find a dividend payment strategy $\{D_t^*\}$ so that 
\begin{equation*}
V(x)=\sup_{D\in\psi}\{V^D(x)\}=V^{D^*}(x)
\tag{1.4}\end{equation*}

The set of admissible strategies $\psi$ consists of non-negative, non-decreasing, adapted, c\`adl\`ag process. $V(x)$ is called the value function of optimal dividends problem \cite{Schmidli}.

To study these kinds of optimal control problem, The Hamilton-Jacobi-Bellman (HJB) equation is essential. The solution of the HJB equation is the value function with the optimal dividends. The HJB equation can be obtained by dynamical programming principle \cite{Albrecher,Schmidli}.

People have studied optimal dividends in classical risk model as well as in dual risk model under a deterministic interest rate \cite{Avanzi,Schmidli,Gerber,Ng,Albrecher}. Recently J.Eisenberg \cite{Eisenberg} published a paper on optimal dividends in the setting of a diffusion approximation of a classical risk model, i.e. Brownian motion with drift. The interest rate in this model also follows a Brownian motion with drift, which is stochastic. They found an explicit expression for the value function of the optimal strategy for both restricted dividends and unrestricted dividends. Also when I was preparing for this paper, I noticed that there is a very recent paper by J.Eisenberg and P.Kr\"uhner which uses the idea of optimal dividends in exponential L\'evy model \cite{EisenbergP}.

In the present paper, we will study the optimal dividends in dual risk model, under a stochastic interest rate.


\section{Geometric Brownian Motion as a Discounting Factor}
We assume for the moment that the stochastic interest rate follows a Brownian motion with drift. So the discounting factor now becomes a geometric Brownian motion:
$$\exp\{-r-mt-\delta B_t\}$$
where $r>0$, $m>0$, $\delta\geq0$,
we assume 
$$m>\frac{\delta^2}{2}$$ throughout this section.\\ 
Given a strategy $D$, the return function, which is the accumulated value of expected discounted dividends up to the ruin time, is given by:
$$V^D(x,r)=\mathbb{E}\left[\int_{0}^{\tau^D}e^{-r-mt-\delta B_t}dD_t\right]$$

\subsection{Restricted Dividends}
In this case we only consider the dividend strategy $D_t$ that is bounded. i.e. $dD_t=U_tdt,U_t\in[0,\xi]$, where $\xi>0$ is a constant.

We abuse the notation by using $V^U$ to denote $V^D$.\\ 

\begin{lemma}
The return function $V^U(x,r)$ is bounded.
\end{lemma}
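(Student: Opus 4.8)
The plan is to bound the discounted dividend integral pathwise (after taking expectations) by exploiting the fact that the dividend rate $U_t$ is capped at $\xi$ and that the discounting factor $e^{-r-mt-\delta B_t}$ has an explicit, integrable expectation. Concretely, since $U_t \in [0,\xi]$ we have $dD_t = U_t\,dt \le \xi\,dt$, and since $\tau^D \le \infty$ we may enlarge the domain of integration to $[0,\infty)$ to get
\begin{equation*}
0 \le V^U(x,r) = \mathbb{E}\left[\int_0^{\tau^D} e^{-r-mt-\delta B_t}\,U_t\,dt\right] \le \xi\, e^{-r}\,\mathbb{E}\left[\int_0^{\infty} e^{-mt-\delta B_t}\,dt\right].
\end{equation*}

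Next I would interchange expectation and the time integral (justified by Tonelli, as the integrand is non-negative) and compute $\mathbb{E}\left[e^{-\delta B_t}\right] = e^{\delta^2 t/2}$, the moment generating function of the Gaussian $B_t \sim N(0,t)$. This gives
\begin{equation*}
\mathbb{E}\left[\int_0^{\infty} e^{-mt-\delta B_t}\,dt\right] = \int_0^{\infty} e^{-mt}\,e^{\delta^2 t/2}\,dt = \int_0^{\infty} e^{-(m-\delta^2/2)t}\,dt = \frac{1}{m-\delta^2/2},
\end{equation*}
where finiteness of the last integral is exactly where the standing assumption $m > \delta^2/2$ of this section is used. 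Combining the two displays yields the explicit bound $V^U(x,r) \le \dfrac{\xi\, e^{-r}}{m-\delta^2/2} < \infty$, uniformly in $x$, which proves the claim.

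I do not anticipate a serious obstacle here; the only point requiring a little care is the Fubini/Tonelli interchange and the fact that the random upper limit $\tau^D$ can be harmlessly replaced by $+\infty$ because the integrand is non-negative — both are routine. If one instead wanted a bound uniform in $r$ as well (e.g. for $r$ ranging over a bounded set or with $r \ge 0$), the same computation gives $V^U(x,r) \le \xi/(m-\delta^2/2)$ since $e^{-r} \le 1$ for $r \ge 0$; I would state whichever version is needed in the sequel.
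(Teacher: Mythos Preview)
Your argument is correct and matches the paper's proof essentially line for line: bound $U_t$ by $\xi$, extend the integral to $[0,\infty)$, swap expectation and integral, use $\mathbb{E}[e^{-\delta B_t}]=e^{\delta^2 t/2}$, and integrate to obtain $V^U(x,r)\le \xi e^{-r}/(m-\delta^2/2)$. Your added justifications (Tonelli for the interchange, non-negativity for replacing $\tau^D$ by $\infty$) are fine and make explicit what the paper leaves implicit.
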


\begin{proof}
\begin{align*}
V^U(x,r)&=\mathbb{E}\left[\int_{0}^{\tau^U}e^{-r-mt-\delta B_t}U_t dt\right]\\
&\leq\mathbb{E}\left[\int_{0}^{\infty}e^{-r-mt-\delta B_t}\xi dt\right]\\
&=\xi e^{-r}\mathbb{E}\left[\int_{0}^{\infty}e^{-mt-\delta B_t}dt\right] \\
&=\xi e^{-r}\int_{0}^{\infty}e^{-mt}\mathbb{E}[e^{-\delta B_t}]dt\\
&=\xi e^{-r}\int_{0}^{\infty}e^{-(m-\delta^2/2)t}dt\\
&=\frac{\xi e^{-r}}{m-\delta^2/2}
\tag{2.1.1}\end{align*}
\end{proof}

The value function $V(x,r)=\sup_{U\in\psi}\{V^U(x,r)\}$.

The HJB equation corresponding to this problem is given by

\begin{equation*}
	-cV_x+mV_r+\frac{1}{2}\delta^2V_{rr}+\lambda \int_{0}^{\infty}[V(x+y,r)-V(x,r)]p(y)dy+\sup_{0\leq u\leq\xi}u\{e^{-r}-V_x\}=0
	\tag{2.1.2}\end{equation*}
	
\begin{lemma}
$V^\xi(x,r)$ obtained from $U_t\equiv\xi$ solves the HJB equation if $\frac{\xi\alpha}{m-\frac{\delta^2}{2}}\leq1$,\\
where 
$\alpha=\frac{-(\theta+\lambda-\beta c-\beta \xi)-\sqrt{(\theta+\lambda-\beta c-\beta \xi)^2+4\theta\beta(c+\xi)}}{2(c+\xi)}$, 
$\theta=m-\frac{\delta^2}{2}$
\end{lemma}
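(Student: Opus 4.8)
The plan is to reduce the problem with a stochastic discount factor to the classical dual model with a \emph{constant} force of interest $\theta := m - \tfrac{1}{2}\delta^2 > 0$, to solve the resulting integro-differential equation in closed form, and then to check that the resulting candidate makes the supremum in the HJB equation (2.1.2) behave correctly.

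First I would observe that under $U\equiv\xi$ the controlled surplus $X^\xi_t = x-(c+\xi)t+S_t$ does not involve $r$ and is independent of the Brownian motion $B$; hence, conditioning on the path of $S$, using $\mathbb{E}[e^{-\delta B_t}]=e^{\delta^2 t/2}$ and Fubini (legitimate by the estimate (2.1.1)),
\begin{equation*}
V^\xi(x,r)=e^{-r}\,\mathbb{E}\!\left[\int_0^{\tau^\xi}e^{-mt-\delta B_t}\xi\,dt\right]=e^{-r}W(x),\qquad W(x):=\xi\,\mathbb{E}_x\!\left[\int_0^{\tau^\xi}e^{-\theta t}\,dt\right],
\end{equation*}
so everything reduces to the return function $W$ of the constant strategy in the dual model with force of interest $\theta$.

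Next I would set up the integro-differential equation for $W$ by the usual renewal argument (condition on the time and size of the first jump of $S$): $W$ solves
\begin{equation*}
-(c+\xi)W'(x)+\lambda\int_0^\infty\big[W(x+y)-W(x)\big]\beta e^{-\beta y}\,dy-\theta W(x)+\xi=0,\qquad x>0,
\end{equation*}
with $W(0)=0$ (from $x=0$ the surplus crosses $0$ instantaneously, so $\tau^\xi=0$) and $W$ bounded on $[0,\infty)$. With exponential claims a constant particular solution contributes $\xi/\theta$, while the homogeneous part is spanned by $e^{sx}$ with $(c+\xi)s^2+(\theta+\lambda-\beta c-\beta\xi)s-\theta\beta=0$; the positive root is excluded by boundedness, leaving the negative root $\alpha$ as in the statement, and imposing $W(0)=0$ gives
\begin{equation*}
W(x)=\frac{\xi}{\theta}\big(1-e^{\alpha x}\big).
\end{equation*}
(One may instead verify this formula directly: plugging it in reduces the integro-differential equation, after dividing by $\xi e^{\alpha x}$, to the quadratic above.) Substituting $V^\xi(x,r)=e^{-r}W(x)$ into (2.1.2), using $V^\xi_r=-e^{-r}W$, $V^\xi_{rr}=e^{-r}W$ so that $mV^\xi_r+\tfrac12\delta^2V^\xi_{rr}=-\theta e^{-r}W$, and cancelling $e^{-r}$, the HJB equation becomes
\begin{equation*}
-cW'(x)-\theta W(x)+\lambda\int_0^\infty\big[W(x+y)-W(x)\big]\beta e^{-\beta y}\,dy+\sup_{0\le u\le\xi}u\big(1-W'(x)\big)=0 .
\end{equation*}
Since $W'(x)=-\tfrac{\xi\alpha}{\theta}e^{\alpha x}$ is monotone in $x$ (indeed $W$ is concave), one has $W'(x)\le W'(0)=-\tfrac{\xi\alpha}{\theta}$, and the stated hypothesis is precisely what guarantees $1-W'(x)\ge 0$ for all $x\ge0$; then the supremum is attained at $u=\xi$ and the last display collapses to $-(c+\xi)W'-\theta W+\lambda\int_0^\infty[W(x+y)-W(x)]\beta e^{-\beta y}\,dy+\xi=0$, which is exactly the equation satisfied by $W$. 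Hence $V^\xi$ solves the HJB equation.

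The step I expect to require the most care is the second one: justifying the collapse of the stochastic discount factor to $e^{-\theta t}$ via the conditioning/Fubini argument, and deriving the integro-differential equation for $W$ together with its correct boundary behaviour — in particular $W(0)=0$ and the exclusion of the exponentially growing homogeneous solution. Once $W$ is known explicitly, the verification of the HJB equation is short, its only nontrivial input being the inequality $W'\le 1$, which is exactly what the hypothesis $\tfrac{\xi\alpha}{m-\delta^2/2}\le 1$ supplies.
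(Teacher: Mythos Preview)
Your proof is correct and follows essentially the same route as the paper: both reduce the stochastic discount to the constant rate $\theta=m-\tfrac{\delta^2}{2}$ via independence of $B$ and the surplus, obtain $V^\xi(x,r)=\frac{\xi e^{-r}}{\theta}(1-e^{\alpha x})$ from the same quadratic in $\alpha$, and then observe that $V^\xi_x\le e^{-r}$ (equivalently $W'\le 1$) exactly when $-\xi\alpha/\theta\le 1$, so the supremum in the HJB equation is attained at $u=\xi$. The only cosmetic difference is that the paper passes through the Laplace transform $m(x)=\mathbb{E}[e^{-\theta\tau^\xi_x}]$ of the ruin time (so that $W=\tfrac{\xi}{\theta}(1-m)$), whereas you write down the integro-differential equation for $W$ directly; the two formulations are equivalent and lead to the same characteristic equation.
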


\begin{proof}
We set:

\begin{equation*}
\tau_x^{\xi,0}:=\inf\{t\geq0:x-(c+\xi)t+S_t=0\}
\tag{2.1.3}\end{equation*}

the corresponding return function is:
\begin{align*}
V^\xi (x,r) &=\xi \mathbb{E}\left[\int_{0}^{\tau_x^{\xi,0}}e^{-r-mt-\delta B_t}dt\right]\\
&=\xi \mathbb{E}\left[\int_{0}^{\tau_x^{\xi,0}}\mathbb{E}[e^{-r-mt-\delta B_t}]dt\right]\\
&=\frac{\xi e^{-r}}{m-\frac{\delta^2}{2}}\mathbb{E}[1-e^{-(m-\frac{\delta^2}{2})\tau_x^{\xi,0}}]
\tag{2.1.4}\end{align*}

Now suppose $m(x)=\mathbb{E}[e^{-\theta \tau_x^{\xi,0}}],\theta=m-\frac{\delta^2}{2}>0$,
$m(x)$ satisfies:
\begin{equation*}
-(c+\xi)m^\prime(x)+\lambda \int_{0}^{\infty}[m(x+y)-m(x)]p(y)dy-\theta m(x)=0
\tag{2.1.5}\end{equation*}

We conjecture that the solution is like $m(x)=Ae^{\alpha x}$, $\alpha<0$. Because ruin is immediate if $x=0$, so $m(0)=1$, i.e. $A=1$. Substitute into (2.1.5) we get: ($p(y)=\beta e^{-\beta y}$)

\begin{equation*}
(c+\xi)\alpha^2+(\theta+\lambda-\beta c-\beta \xi)\alpha-\theta \beta=0
\tag{2.1.6}\end{equation*}

Solve (2.1.6), $\alpha=\frac{-(\theta+\lambda-\beta c-\beta \xi)-\sqrt{(\theta+\lambda-\beta c-\beta \xi)^2+4\theta\beta(c+\xi)}}{2(c+\xi)}<0$

\begin{equation*}
V^\xi (x,r)=\frac{\xi e^{-r}}{m-\frac{\delta^2}{2}}(1-e^{\alpha x})
\tag{2.1.7}\end{equation*}

\begin{equation*}
V_x^\xi (x,r)=\frac{-\xi e^{-r}}{m-\frac{\delta^2}{2}}\alpha e^{\alpha x}>0
\tag{2.1.8}\end{equation*}

In particular, $V_x^\xi (r,x)\leq e^{-r}$ if $-\frac{\xi\alpha}{m-\frac{\delta^2}{2}}\leq1$. 

This means that $V^\xi (r,x)$ solves the HJB equation if $-\frac{\xi\alpha}{m-\frac{\delta^2}{2}}\leq1$.
\end{proof}

\begin{lemma}
\begin{equation*}
V(x,r)=e^{-r}F(x)=\left\{
\begin{array}{rcl}
e^{-r}F_1(x) &&  x>\hat{x}\\
e^{-r}F_2(x) &&  x\leq\hat{x}
\end{array}
\right.
\end{equation*}
solves the HJB equation if $-\frac{\xi\alpha}{m-\frac{\delta^2}{2}}>1$,\\
where $F_1(x)=Ae^{r_1x}+\frac{\xi}{m-\frac{\delta^2}{2}}$ and $F_2(x)=B(e^{s_1 x}-e^{s_2 x})$\\
$r_1$ is the negative solution of the equation:

\begin{equation*}
(c+\xi)x^2+\left[m+\lambda-\frac{\delta^2}{2}-\beta(c+\xi)\right]x-\beta(m-\frac{\delta^2}{2})=0
\end{equation*}\\
$s_1$ and $s_2$ are respectively the positive and negative solution of the function:

\begin{equation*}
cx^2+(m+\lambda-\frac{\delta^2}{2}-\beta c)x-\beta(m-\frac{\delta^2}{2})=0
\end{equation*}
\begin{equation*}
A=-\frac{\xi(\beta-r_1)}{\beta(m-\frac{\delta^2}{2})}\times \frac{s_1(\beta-s_2)e^{s_1\hat{x}}-s_2(\beta-s_1)e^{s_2\hat{x}}}{(s_1-r_1)(\beta-s_2)e^{s_1\hat{x}}-(s_2-r_1)(\beta-s_1)e^{s_2\hat{x}}}e^{-r_1\hat{x}}
\end{equation*}

\begin{equation*}
B=\frac{\xi(-r_1)}{\beta(m-\frac{\delta^2}{2})}\times \frac{(\beta-s_1)(\beta-s_2)}{(s_1-r_1)(\beta-s_2)e^{s_1\hat{x}}-(s_2-r_1)(\beta-s_1)e^{s_2\hat{x}}}
\end{equation*}
\begin{equation*}
\hat{x}=\frac{1}{s_1-s_2}ln\frac{s_2(s_2-r_1)(\beta-s_1)}{s_1(s_1-r_1)(\beta-s_2)}
\end{equation*}
\end{lemma}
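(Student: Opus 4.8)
The plan is to read the hypothesis $-\frac{\xi\alpha}{m-\delta^2/2}>1$ as the statement that the full-payment return function $V^\xi$ of the previous lemma already violates the HJB equation near the origin, and then to construct $V$ from a threshold strategy. First observe that the $\alpha$ of the previous lemma equals the present $r_1$: both are the negative root of $(c+\xi)x^2+(m+\lambda-\frac{\delta^2}{2}-\beta(c+\xi))x-\beta(m-\frac{\delta^2}{2})=0$, since $\theta+\lambda-\beta c-\beta\xi=m+\lambda-\frac{\delta^2}{2}-\beta(c+\xi)$ with $\theta=m-\frac{\delta^2}{2}$. By (2.1.8) we then have $V^\xi_x(0,r)=-\frac{\xi\alpha}{m-\delta^2/2}\,e^{-r}>e^{-r}$, so $e^{-r}-V^\xi_x<0$ near $x=0$ and the supremum in the HJB equation is there attained at $u=0$, not $u=\xi$; paying at the maximal rate cannot be optimal for small surplus. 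Accordingly I look for $V(x,r)=e^{-r}F(x)$ coming from the rule ``pay nothing on $[0,\hat x]$, pay at rate $\xi$ on $(\hat x,\infty)$'' for some $\hat x>0$. Substituting $V_x=e^{-r}F'$, $V_r=-e^{-r}F$, $V_{rr}=e^{-r}F$ and $V(x+y,r)=e^{-r}F(x+y)$ and cancelling $e^{-r}$ reduces (2.1.2) to
\[
-cF'(x)-\theta F(x)+\lambda\int_0^\infty[F(x+y)-F(x)]p(y)\,dy+\sup_{0\le u\le\xi}u\bigl(1-F'(x)\bigr)=0 ,
\]
where on $[0,\hat x]$ I anticipate $F'\ge1$ (supremum at $u=0$) and on $(\hat x,\infty)$ I anticipate $F'\le1$ (supremum at $u=\xi$).

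On $(\hat x,\infty)$ the reduced equation reads $-(c+\xi)F_1'-(\theta+\lambda)F_1+\lambda\int_0^\infty F_1(x+y)\beta e^{-\beta y}\,dy+\xi=0$ and on $[0,\hat x]$ it reads $-cF_2'-(\theta+\lambda)F_2+\lambda\int_0^\infty F_2(x+y)\beta e^{-\beta y}\,dy=0$. To deal with the integral term I use the standard device: with $(\mathcal{P}g)(x):=\int_0^\infty g(x+y)\beta e^{-\beta y}\,dy=\beta e^{\beta x}\int_x^\infty g(z)e^{-\beta z}\,dz$ one has $(\mathcal{P}g)'=\beta(\mathcal{P}g)-\beta g$, so solving each first-order equation for $\mathcal{P}g$, differentiating once, and resubstituting turns it into a constant-coefficient second-order ODE: $(c+\xi)F_1''+(m+\lambda-\frac{\delta^2}{2}-\beta(c+\xi))F_1'-\beta(m-\frac{\delta^2}{2})F_1+\beta\xi=0$ and $cF_2''+(m+\lambda-\frac{\delta^2}{2}-\beta c)F_2'-\beta(m-\frac{\delta^2}{2})F_2=0$, the characteristic polynomials being precisely those in the statement. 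Each has one positive and one negative root, since the product of the roots is negative (using $m>\frac{\delta^2}{2}$). For $F_1$ the constant $\frac{\xi}{m-\delta^2/2}$ is a particular solution, and since the value function is bounded (the first lemma of this subsection), so is $F_1$, which forces us to keep only the negative root $r_1$; hence $F_1=Ae^{r_1x}+\frac{\xi}{m-\delta^2/2}$. For $F_2$ I impose $F_2(0)=0$ — ruin is immediate at $x=0$ and nothing is paid before it — so $F_2=B(e^{s_1x}-e^{s_2x})$ with $s_1>0>s_2$.

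It remains to pin down $A$, $B$ and $\hat x$ and to confirm that the resulting $F$ really solves the HJB equation. I impose (i) continuity $F_1(\hat x)=F_2(\hat x)$, (ii) smooth fit $F_1'(\hat x)=F_2'(\hat x)$, and (iii) the extra relation forced by the observation that for $x\le\hat x$ the integral $\int_0^\infty F(x+y)\beta e^{-\beta y}\,dy$ runs past $\hat x$ and so actually involves $F_1$: evaluating it explicitly produces a spurious $e^{\beta x}$-term, and the $[0,\hat x]$-equation holds only once the coefficient of that term is set to zero. Conditions (i)--(iii) form a $3\times3$ linear system whose solution, after simplification, is the displayed $\hat x$, $A$, $B$; checking that $-\frac{\xi\alpha}{\theta}>1$ indeed makes $\hat x>0$ and $A,B>0$ is part of this step. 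With the constants so chosen the integro-differential identity holds on each region by construction ($r_1,s_1,s_2$ being roots cancels the exponential terms, the particular solution absorbs the constant, and (iii) removes the $e^{\beta x}$-term), and subtracting the two regional equations at $x=\hat x$ — legitimate because $F$, $F'$ and $\mathcal{P}F$ are continuous there — yields $F'(\hat x)=1$ for free, confirming the threshold picture. The last point is to check the band structure is self-consistent, i.e. $F'\ge1$ on $[0,\hat x]$ and $F'\le1$ on $(\hat x,\infty)$; this follows by inspecting the explicit exponentials using $r_1<0<s_1$, $s_2<0$ and $A,B>0$. This self-consistency check, together with the algebra needed to extract the precise closed forms from the $3\times3$ system, is where the main work lies.
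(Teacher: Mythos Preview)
Your proposal is correct and follows essentially the same architecture as the paper: separable ansatz $V=e^{-r}F(x)$, split into two regions, reduce each integro-differential equation to a second-order constant-coefficient ODE via the $(\tfrac{d}{dx}-\beta)$ trick, take the bounded solution above and the solution vanishing at zero below, and match at $\hat x$.

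The one substantive difference is how $\hat x$ is pinned down. You impose a $3\times3$ system --- continuity, smooth fit $F_1'(\hat x)=F_2'(\hat x)$, and vanishing of the spurious $e^{\beta x}$-coefficient --- and then read off $F'(\hat x)=1$ as a consequence. The paper instead uses only the two conditions (continuity and the $e^{\beta x}$-coefficient) to express $A,B$ as functions of $\hat x$, and then \emph{maximizes} $A$ and $B$ over $\hat x$; the first-order condition $\partial B/\partial\hat x=0$ (equivalently $\partial A/\partial\hat x=0$) is what delivers the displayed formula for $\hat x$. The two routes give the same free boundary --- smooth fit and optimality of the threshold coincide here, as usual --- but the paper's maximization argument makes the optimality interpretation explicit, while your smooth-fit route makes the $C^1$-regularity of $F$ explicit. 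Your additional plan to verify the band structure ($F'\ge1$ on $[0,\hat x]$, $F'\le1$ on $(\hat x,\infty)$) is a genuine improvement; the paper does not carry out that check.
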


\begin{proof}

If $-\frac{\xi\alpha}{m-\frac{\delta^2}{2}}>1$, according to (2.1.7), we conjecture that $V(x,r)=e^{-r}F(x)$. Substitute into (2.1.2),

\begin{equation*}
	-cF^\prime(x)-(m+\lambda-\frac{\delta^2}{2})F(x)+\lambda \int_{0}^{\infty}F(x+y)p(y)dy+\sup_{0\leq u\leq\xi}u(1-F^\prime(x))=0
	\tag{2.1.9}\end{equation*}

According to Page 99 of Schmidli \cite{Schmidli}, we need to solve the following two equations:

\begin{equation*}
	-cF_1^\prime(x)-(m+\lambda-\frac{\delta^2}{2})F_1(x)+\lambda \int_{0}^{\infty}F_1(x+y)p(y)dy+\xi(1-F_1^\prime(x))=0
	\tag{2.1.10}\end{equation*}

\begin{equation*}
	\begin{split}
		&-cF_2^\prime(x)-(m+\lambda-\frac{\delta^2}{2})F_2(x)+\lambda[\int_{0}^{\hat{x}-x}F_2(x+y)p(y)dy\\
		&+\int_{\hat{x}-x}^{\infty}F_1(x+y)p(y)dy]=0
	\end{split}
	\tag{2.1.11}\end{equation*}

$\hat{x}$ is a threshold point that

\begin{equation*}
	F(x)=\left\{
	\begin{array}{rcl}
		F_1(x) &&  x>\hat{x}\\
		F_2(x) &&  x\leq\hat{x}
	\end{array}
	\right.
	\tag{2.1.12}\end{equation*}

In (2.1.11) we write $\lambda \int_{0}^{\infty}F(x+y)p(y)dy=\lambda[\int_{0}^{\hat{x}-x}F_2(x+y)p(y)dy+\int_{\hat{x}-x}^{\infty}F_1(x+y)p(y)dy]$ because of the jump size $y$ in $F(x+y)$.

First we solve (2.1.10), proceeding like (2.1.5), we conjecture that:
\begin{equation*}
	F_1(x)=Ae^{r_1x}+\frac{\xi}{m-\frac{\delta^2}{2}}
	\tag{2.1.13}\end{equation*}

Note that $r_1<0$ because according to (2.1.1), $V^U(x,r)$ is bounded. $\frac{\xi}{m-\frac{\delta^2}{2}}$ is a particular solution of the inhomogeneous equation (2.1.10).

Substitute (2.1.13) into (2.1.10) we get $r_1$ to be the negative solution of the equation:

\begin{equation*}
	(c+\xi)x^2+[m+\lambda-\frac{\delta^2}{2}-\beta(c+\xi)]x-\beta(m-\frac{\delta^2}{2})=0
	\tag{2.1.14}\end{equation*}

Next we solve (2.1.11). Substitute (2.1.13) into (2.1.11), we get:

\begin{equation*}
	\begin{split}
		&-cF_2^\prime(x)-(m+\lambda-\frac{\delta^2}{2})F_2(x)+\lambda\int_{0}^{\hat{x}-x}F_2(x+y)p(y)dy\\
		&+\frac{\lambda A\beta}{\beta-r_1}e^{(r_1-\beta)\hat{x}+\beta x}+\frac{\lambda\xi}{m-\frac{\delta^2}{2}}e^{-\beta(\hat{x}-x)}=0
	\end{split}
	\tag{2.1.15}\end{equation*}

Note that by change of variable, $\int_{0}^{\hat{x}-x}F_2(x+y)p(y)dy$ can be written as $\int_{x}^{\hat{x}}F_2(u)p(u-x)du$. Also by applying the operator $(\frac{d}{dx}-\beta)$, we can get rid of the terms including $e^{\beta x}$. Then (2.1.15) becomes:

\begin{equation*}
	cF_2^{\prime\prime}(x)+(m+\lambda-\frac{\delta^2}{2}-\beta c)F_2^\prime(x)-\beta(m-\frac{\delta^2}{2})F_2(x)=0
	\tag{2.1.16}\end{equation*}

Noting that $F_2(0)=0$,

\begin{equation*}
	F_2(x)=B(e^{s_1 x}-e^{s_2 x})
	\tag{2.1.17}\end{equation*}

$s_1$ and $s_2$ are respectively the positive and negative solution of the function:

\begin{equation*}
	cx^2+(m+\lambda-\frac{\delta^2}{2}-\beta c)x-\beta(m-\frac{\delta^2}{2})=0
	\tag{2.1.18}\end{equation*}

Now we need to determine the constant $A,B,\hat{x}$.

Substitute (2.1.17) back into (2.1.15), we get:

\begin{equation*}
	\begin{split}
		&(\lambda+m-\frac{\delta^2}{2}+cs_1)Be^{s_1x}-(\lambda+m-\frac{\delta^2}{2}+cs_2)Be^{s_2x}\\
		&=\frac{\lambda\beta B}{\beta-s_1}e^{s_1x}-\frac{\lambda\beta B}{\beta-s_2}e^{s_2x}-\frac{\lambda\beta Be^{-(\beta-s_1)\hat{x}}}{\beta-s_1}e^{\beta x}\\
		&+\frac{\lambda\beta Be^{-(\beta-s_2)\hat{x}}}{\beta-s_2}e^{\beta x}+\frac{\lambda\beta Be^{-(\beta-r_1)\hat{x}}}{\beta-r_1}e^{\beta x}+\frac{\lambda\xi e^{-\beta\hat{x}}}{m-\frac{\delta^2}{2}}e^{\beta x}
	\end{split}
	\tag{2.1.19}\end{equation*}

Since the expression above holds for all $0\leq x\leq\hat{x}$, the sum of the coefficients of $e^{\beta x}$ must be zero.

\begin{equation*}
	B(\frac{e^{s_1\hat{x}}}{\beta-s_1}-\frac{e^{s_2\hat{x}}}{\beta-s_2})=\frac{Ae^{r_1\hat{x}}}{\beta-r_1}+\frac{\xi}{\beta(m-\frac{\delta^2}{2})}
	\tag{2.1.20}\end{equation*}

Also by the continuity condition, $F_1(\hat{x})=F_2(\hat{x})$

\begin{equation*}
	B(e^{s_1\hat{x}}-e^{s_2\hat{x}})=Ae^{r_1\hat{x}}+\frac{\xi}{m-\frac{\delta^2}{2}}
	\tag{2.1.21}\end{equation*}

By solving (2.1.20) and (2.1.21), we get:

\begin{equation*}
	A=-\frac{\xi(\beta-r_1)}{\beta(m-\frac{\delta^2}{2})}\times \frac{s_1(\beta-s_2)e^{s_1\hat{x}}-s_2(\beta-s_1)e^{s_2\hat{x}}}{(s_1-r_1)(\beta-s_2)e^{s_1\hat{x}}-(s_2-r_1)(\beta-s_1)e^{s_2\hat{x}}}e^{-r_1\hat{x}}
	\tag{2.1.22}\end{equation*}

\begin{equation*}
	B=\frac{\xi(-r_1)}{\beta(m-\frac{\delta^2}{2})}\times \frac{(\beta-s_1)(\beta-s_2)}{(s_1-r_1)(\beta-s_2)e^{s_1\hat{x}}-(s_2-r_1)(\beta-s_1)e^{s_2\hat{x}}}
	\tag{2.1.23}\end{equation*}

To calculate $\hat{x}$, since $V^\xi(x,r)$ achieves the maximum at $\hat{x}$, we need to maximize $A$ and $B$.

By looking at (2.1.23), we find that maximizing $B$ is equivalent to minimizing $(s_1-r_1)(\beta-s_2)e^{s_1\hat{x}}-(s_2-r_1)(\beta-s_1)e^{s_2\hat{x}}$,so

\begin{equation*}
	\hat{x}=\frac{1}{s_1-s_2}\ln\frac{s_2(s_2-r_1)(\beta-s_1)}{s_1(s_1-r_1)(\beta-s_2)}
	\tag{2.1.24}\end{equation*}

When maximizing $A$, we solve $\frac{\partial A}{\partial\hat{x}}=0$, after simple but tedious calculation we get  $s_1(s_1-r_1)(\beta-s_2)e^{s_1\hat{x}}-s_2(s_2-r_1)(\beta-s_1)e^{s_2\hat{x}}=0$. It also gives

\begin{equation*}
	\hat{x}=\frac{1}{s_1-s_2}\ln\frac{s_2(s_2-r_1)(\beta-s_1)}{s_1(s_1-r_1)(\beta-s_2)}
	\tag{2.1.25}\end{equation*}

The calculation of $\hat{x}$, on the contrary, proves that our calculation of $A$ and $B$ is correct.

Therefore, the value function:

\begin{equation*}
	V(x,r)=e^{-r}F(x)=\left\{
	\begin{array}{rcl}
		e^{-r}F_1(x) &&  x>\hat{x}\\
		e^{-r}F_2(x) &&  x\leq\hat{x}
	\end{array}
	\right.
	\tag{2.1.26}\end{equation*}
	
solves the HJB equation.
\end{proof}

Next let us provide a verification theorem to show that:

\begin{theorem}
The optimal strategy $U^*=\{U_t^*\}$ is

\begin{equation*}
U_t^* (x)=\xi\mathbb{I}_{\{X_t^{U^*}>\hat{x}\}}
\tag{2.1.27}\end{equation*}

\end{theorem}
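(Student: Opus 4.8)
The plan is to give a classical verification argument based on It\^o's formula and the HJB equation~(2.1.2). First I would recast the discounting: writing $R_t:=r+mt+\delta B_t$, so that $R_0=r$, $dR_t=m\,dt+\delta\,dB_t$ and the discount factor at time $t$ equals $e^{-R_t}$; under a restricted strategy $U$ the surplus solves $dX_t^U=-(c+U_t)\,dt+dS_t$ with $X_0^U=x$ and $\tau^U=\inf\{t:X_t^U<0\}$. As candidate I would take $V(x,r)=e^{-r}F(x)$ with $F$ the function of~(2.1.26) when $-\tfrac{\xi\alpha}{m-\delta^2/2}>1$, and $V(x,r)=V^\xi(x,r)$ from~(2.1.7) otherwise (so that $\hat x\le0$ and $U^*\equiv\xi$); in both cases $F$ is non-negative, bounded by~(2.1.1), of class $C^1$ on $[0,\infty)$, vanishes at $0$, and --- this is the structural input coming from the preceding Lemma --- solves the HJB equation in the reduced form~(2.1.9), with the supremum attained at $u=\xi$ on $\{x>\hat x\}$ and at $u=0$ on $\{x\le\hat x\}$.

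\emph{Step 1 (upper bound).} For an arbitrary admissible $U$ I would apply the change-of-variables formula to $e^{-R_t}F(X_t^U)$; since $D$ is absolutely continuous, the continuous part of $X^U$ has finite variation, and its jumps are positive, this is just the ordinary chain rule plus a sum over jumps, yielding for $t<\tau^U$
\begin{equation*}
\begin{split}
d\!\left(e^{-R_t}F(X_t^U)\right)&=e^{-R_t}\Big[\big(\tfrac{\delta^2}{2}-m\big)F(X_t^U)-(c+U_t)F'(X_t^U)\\
&\qquad+\lambda\!\int_0^\infty\!\big(F(X_t^U+y)-F(X_t^U)\big)p(y)\,dy\Big]dt+dM_t,
\end{split}
\end{equation*}
where $M$ is a local martingale (the term $-\delta e^{-R_t}F(X_t^U)\,dB_t$ together with the compensated jump term). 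Substituting the bracket via~(2.1.9), the drift of $Z_t^U:=e^{-R_t}F(X_t^U)+\int_0^t e^{-R_s}U_s\,ds$ equals $e^{-R_t}\big[U_t(1-F'(X_t^U))-\sup_{0\le u\le\xi}u(1-F'(X_t^U))\big]\,dt\le0$, so $Z^U$ is a local supermartingale on $[0,\tau^U)$. Localising along $T_n\uparrow\tau^U$ and using $F\ge0$, I get $\mathbb{E}[\int_0^{t\wedge T_n}e^{-R_s}U_s\,ds]\le V(x,r)$, and then let $n\to\infty$ and $t\to\infty$ by monotone convergence to conclude $V^U(x,r)\le V(x,r)$; as $U$ is arbitrary, $V\ge\sup_{U\in\psi}V^U$.

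\emph{Step 2 (attainment).} For the feedback rule $U^*_t=\xi\mathbb{I}_{\{X_t^{U^*}>\hat x\}}$ the controlled process is a well-posed piecewise-deterministic Markov process (linear decrease between finitely many positive jumps), and because $U^*$ is exactly the pointwise maximiser in the HJB equation, the drift above vanishes identically, so $Z^{U^*}$ is a local martingale. Localising as before gives the exact identity $\mathbb{E}[e^{-R_{t\wedge T_n}}F(X_{t\wedge T_n}^{U^*})]+\mathbb{E}[\int_0^{t\wedge T_n}e^{-R_s}U^*_s\,ds]=V(x,r)$, after which I would pass to the limit. On $\{\tau^{U^*}<\infty\}$ ruin occurs through the continuous downward drift, so $X^{U^*}_{\tau^{U^*}}=0$ and $F$ vanishes there; on $\{\tau^{U^*}=\infty\}$, $R_t/t\to m>0$ a.s.\ forces $e^{-R_t}\to0$ a.s. The one integrability input is that $\sup_{s\ge0}e^{-R_s}=e^{-r}\exp\big(\sup_{s\ge0}(-ms-\delta B_s)\big)$ is integrable precisely when $m>\delta^2/2$ (for $\delta>0$ the running maximum of the drifted Brownian motion $-ms-\delta B_s$ is $\mathrm{Exp}(2m/\delta^2)$, whose exponential has finite mean iff $2m/\delta^2>1$); together with the boundedness of $F$ this supplies a dominating variable, so the first expectation tends to $0$ by dominated convergence while the second increases to $V^{U^*}(x,r)$. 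Hence $V^{U^*}(x,r)=V(x,r)$, and with Step~1 this gives $V^{U^*}(x,r)=V(x,r)=\sup_{U\in\psi}V^U(x,r)$: $U^*$ is optimal and $V$ is the value function.

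The hard part is not the It\^o and limiting mechanics but the structural fact behind Step~1, namely that the explicitly assembled $F$ from the preceding Lemma really solves the \emph{full} HJB equation: one must check that $F\in C^1$ across $\hat x$ (so the change-of-variables formula applies with no leftover term at $\hat x$) and that $F'\ge1$ on $[0,\hat x]$ while $F'\le1$ on $(\hat x,\infty)$, which is what makes $\xi\mathbb{I}_{\{x>\hat x\}}$, rather than some other switching rule, the pointwise maximiser of $u\mapsto u(1-F'(x))$; this is exactly where the particular value of $\hat x$ in~(2.1.24) and the constants $A,B$ in~(2.1.22)--(2.1.23) are used. A secondary point is justifying the localisation and the $t\to\infty$ passage, which I would reduce to the single integrability estimate above; note that the upward-jumping surplus $X^U$ never has to be controlled directly, since it enters only through the bounded function $F$.
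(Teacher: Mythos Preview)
Your proposal is correct and follows essentially the same verification argument as the paper: apply It\^o's formula to $e^{-R_t}F(X_t^U)$, use the HJB equation~(2.1.2) to obtain a supermartingale inequality (with equality for $U^*$), and pass to the limit $t\to\infty$. The only notable difference is in the limiting step: rather than your dominated-convergence argument via the integrability of $\sup_{s\ge0}e^{-R_s}$, the paper simply bounds $\mathbb{E}\big[e^{-R_t}F(X_t^U)\mathbb{I}_{\{\tau^U>t\}}\big]\le\frac{\xi}{m-\delta^2/2}\,\mathbb{E}[e^{-R_t}]=\frac{\xi}{m-\delta^2/2}\,e^{-r-(m-\delta^2/2)t}\to0$, which is more elementary and avoids any appeal to the distribution of the running maximum.
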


Such a strategy is called the threshold strategy \cite{Schmidli,Gerber}. 

\begin{figure}[h]
\centering
\includegraphics[width=0.7\linewidth]{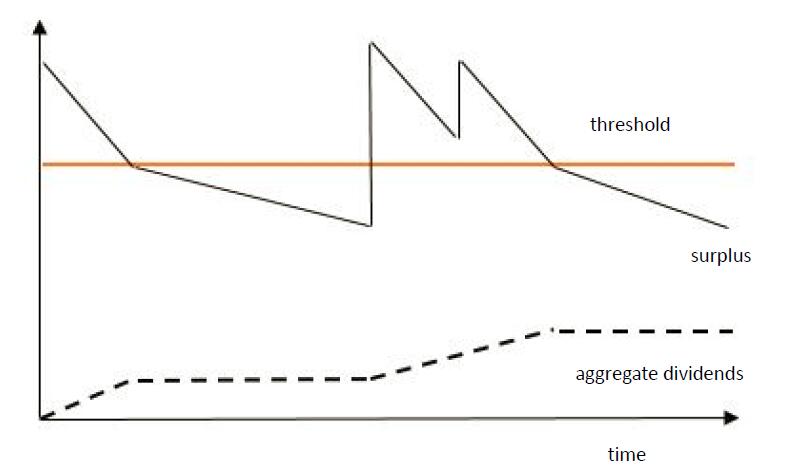}
\caption{Illustration of the threshold strategy in dual risk model, the $x$ axis denote the time evolution. The $y$ axis consists of three components, the red line is the threshold $\hat{x}$, the solid line is the surplus, the dashed line is the accumulated dividends. When the surplus is above the threshold, we pay the maximum dividend. When the surplus is below the threshold, we pay no dividend.}
\label{fig:threshold}
\end{figure}

\begin{proof}
	Suppose $U$ is an arbitrary admissible strategy and $\tau^U$ be the ruin time of $\{X_t^U\}$. Since $e^{-r}F(x)$ satisfies (2.1.2), by It\^o's formula:

	\begin{equation*}
		\begin{split}
			&e^{-r-m(t\wedge\tau^U)-\delta B_{t\wedge\tau^U}}F(X_{t\wedge\tau^U}^U)=e^{-r}F(x)\\&+
			\int_{0}^{t\wedge\tau^U}e^{-r-ms-\delta B_s}F^\prime(X_s^U)(-c-u)ds\\
			&+\lambda\int_{0}^{t\wedge\tau^U}e^{-r-ms-\delta B_s}\int_{0}^{\infty}[F(X_s^U+y)-F(X_s^U)]p(y)dyds\\
			&-\int_{0}^{t\wedge\tau^U}e^{-r-ms-\delta B_s}(m-\frac{\delta^2}{2})F(X_s^U)ds\\
			&+\delta\int_{0}^{t\wedge\tau^U}e^{-r-ms-\delta B_s}F(X_s^U)dB_s\\
			&\leq e^{-r}F(x)-\int_{0}^{t\wedge\tau^U}e^{-r-ms-\delta B_s}U_sds\\
			&+\delta\int_{0}^{t\wedge\tau^U}e^{-r-ms-\delta B_s}F(X_s^U)dB_s
		\end{split}
	\end{equation*}

	The equality holds when $U=U^*$
	
	Since $F$ is bounded and $\int_{0}^{t}[\mathbb{E}[e^{-r-ms-\delta B_s}]^2]ds<\infty$, so the last stochastic integral above is a martingale whose expectation euqals 0. Then we can get:
	
	\begin{equation*}
		\begin{split}
			\mathbb{E}[e^{-r-m(t\wedge\tau^U)-\delta B_{t\wedge\tau^U}}F(X_{t\wedge\tau^U}^U)]\\
			\leq e^{-r}F(x)-\mathbb{E}\left[\int_{0}^{t\wedge\tau^U}e^{-r-ms-\delta B_s}U_sds\right]
		\end{split}
		\tag{2.1.28}\end{equation*}
	
	If $\tau^U<t$, $F(X_{t\wedge\tau^U}^U)=F(0)=0$. Since $F(x)\leq\frac{\xi}{m-\frac{\delta^2}{2}}$, we have:
	
	\begin{equation*}
		\begin{split}
			&\mathbb{E}[e^{-r-m(t\wedge\tau^U)-\delta B_{t\wedge\tau^U}}F(X_{t\wedge\tau^U}^U)]\\
			&=\mathbb{E}[e^{-r-mt-\delta B_t}F(X_t^U)\mathbb{I}_{[\tau^U>t]}]\\
			&\leq\mathbb{E}[e^{-r-mt-\delta B_t}F(X_t^U)]\\
			&\leq\mathbb{E}[e^{-r-mt-\delta B_t}]\frac{\xi}{m-\frac{\delta^2}{2}}\\
			&=e^{-r-(m-\frac{\delta^2}{2})t}\frac{\xi}{m-\frac{\delta^2}{2}}
		\end{split}
	\end{equation*}
	
	Then we have:
	
	\begin{equation*}
		\lim_{t\to\infty}\mathbb{E}[e^{-r-m(t\wedge\tau^U)-\delta B_{t\wedge\tau^U}}F(X_{t\wedge\tau^U}^U)]=0
	\end{equation*}
	
	According to (2.1.28), 
	
	\begin{equation*}
		e^{-r}F(x)\left\{
		\begin{array}{rcl}
			\geq V^U(x,r) &&  U \ $arbitrary$\\
			=V^{U^*}(x,r) &&  U=U^*
		\end{array}
		\right.
	\end{equation*}
	
	So we conclude that :
	
	\begin{equation*}
		V(x,r)\geq V^{U^*}(x,r)=e^{-r}F(x)\geq\sup_{U\in\psi}V^U(x,r)=V(x,r)
	\end{equation*}

\end{proof}

\vspace{1em}
\vspace{1em}
\vspace{1em}
\subsection{Unrestricted Dividends}

Here we consider $D_t\in\psi$ without restriction.

The value function $V(x,r)=\sup_{D\in\psi}\{V^D(x,r)\}$.

This is a singular control problem, see e.g. chapter 8 in \cite{Fleming}  and the corresponding HJB equation is given by:

\begin{equation*}
	\begin{split}
		\max\{-cV_x+\lambda\int_{0}^{\infty}[V(x+y,r)-V(x,r)]p(y)dy\\
		+mV_r+\frac{1}{2}\delta^2 V_{rr} ; \ \ e^{-r}- V_x\}=0
	\end{split}
	\tag{2.2.1}\end{equation*}

Suppose the barrier strategy with parameter $b$ is applied \cite{Avanzi}, which means that no dividend is paid out if $X_t<b$ and the excess is paid out immediately as a dividend if $X_t>b$.

\begin{figure}[h]
\centering
\includegraphics[width=0.7\linewidth]{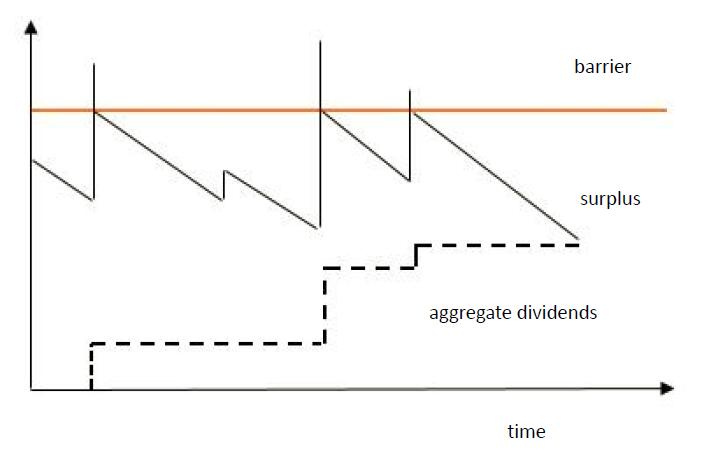}
\caption{Illustration of the barrier strategy in dual risk model, the $x$ axis denote the time evolution. The $y$ axis consists of three components, the red line is the barrier $b$, the solid line is the surplus, the dashed line is the accumulated dividends. When the surplus is above the threshold, we pay the excess of the surplus. When the surplus is below the threshold, we pay no dividend.}
\label{fig:barrier}
\end{figure}

\begin{lemma}
\begin{equation*}
V(x,r)=e^{-r}\left\{
\begin{array}{rcl}
K(e^{s_3 x}-e^{s_4 x}) &&  x\leq b\\
x-b+F(b,b) &&  x>b
\end{array}
\right.
\end{equation*}
solves the HJB equation,\\
where $s_3$, $s_4$ are respectively positive and negative solutions of the equation:

\begin{equation*}
cx^2+(m+\lambda-\frac{\delta^2}{2}-\beta c)x-\beta(m-\frac{\delta^2}{2})=0
\end{equation*}
\begin{equation*}
K=\frac{\lambda}{\beta}\times\frac{1}{(cs_3+m-\frac{\delta^2}{2})e^{s_3 b}-(cs_4+m-\frac{\delta^2}{2})e^{s_4 b}}
\end{equation*}
\begin{equation*}
F(x,b)=\frac{\lambda}{\beta}\times\frac{e^{s_3 x}-e^{s_4 x}}{(cs_3+m-\frac{\delta^2}{2})e^{s_3 b}-(cs_4+m-\frac{\delta^2}{2})e^{s_4 b}}, 0\leq x\leq b
\end{equation*}
\begin{equation*}
b=\frac{1}{s_3-s_4}\ln\frac{s_4(cs_4+m-\frac{\delta^2}{2})}{s_3(cs_3+m-\frac{\delta^2}{2})}
\end{equation*}
\end{lemma}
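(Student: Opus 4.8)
The plan is to reuse the separated ansatz $V(x,r)=e^{-r}F(x)$ from the restricted‑dividends analysis. With $\theta:=m-\tfrac{\delta^2}{2}>0$ (available since $m>\tfrac{\delta^2}{2}$ is standing in this section) one has $mV_r+\tfrac12\delta^2V_{rr}=-\theta e^{-r}F$, so the factor $e^{-r}$ cancels out of (2.2.1) and the HJB reduces to a one‑dimensional free‑boundary problem with free boundary $b$. On $\{x>b\}$ the branch $e^{-r}-V_x=0$ forces $F(x)=x-b+F(b)$, with continuity at $b$ then automatic; on $\{x\le b\}$ the continuation branch must vanish, and, splitting the jump integral at the overshoot level (a jump $y$ from $x\le b$ with $x+y>b$ lands on the linear piece), this reads
\[
-cF'(x)-\theta F(x)+\lambda\left[\int_{0}^{b-x}F(x+y)\beta e^{-\beta y}\,dy+\int_{b-x}^{\infty}(x+y-b+F(b))\beta e^{-\beta y}\,dy-F(x)\right]=0 .
\]
Exactly as in the step (2.1.15)$\to$(2.1.16), a change of variables followed by the operator $(\tfrac{d}{dx}-\beta)$ eliminates the integral and every $e^{\beta x}$ term, leaving $cF''+(m+\lambda-\tfrac{\delta^2}{2}-\beta c)F'-\beta\theta F=0$, whose characteristic polynomial is the one in (2.1.18); its roots have product $-\beta\theta/c<0$, hence are real of opposite sign $s_3>0>s_4$, and its value at $\beta$ is $\lambda\beta>0$, so $s_3<\beta$. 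Imposing $F(0)=0$ (ruin is immediate at $x=0$) gives $F(x)=K(e^{s_3x}-e^{s_4x})$.

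Two scalars $K,b$ remain. To pin $K$ I would substitute $F=K(e^{s_3x}-e^{s_4x})$ back into the \emph{un}differentiated equation and require the coefficient of $e^{\beta x}$ (the information discarded by $(\tfrac{d}{dx}-\beta)$) to vanish; equivalently, evaluating that equation at $x=b$ gives the compact identity $cF'(b)+\theta F(b)=\tfrac{\lambda}{\beta}$. Combined with the algebraic relation $(cs+\theta)(s-\beta)=-\lambda s$, i.e. $cs+\theta=\tfrac{\lambda s}{\beta-s}$, obtained by expanding the characteristic equation, this collapses to $K=\tfrac{\lambda}{\beta}\big[(cs_3+\theta)e^{s_3b}-(cs_4+\theta)e^{s_4b}\big]^{-1}$, hence the stated $K$ and $F(x,b)$. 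To pin $b$ I would use that (2.2.1) must hold on \emph{both} sides of the boundary: on $\{x<b\}$ the branch $e^{-r}-V_x\le0$ forces $F'(x)\ge1$, so $F'(b)\ge1$ by continuity; on $\{x>b\}$ the continuation branch equals $e^{-r}\big[-c+\tfrac{\lambda}{\beta}-\theta(x-b+F(b))\big]$, whose value at $x=b^+$ must be $\le0$, which via $cF'(b)+\theta F(b)=\tfrac{\lambda}{\beta}$ is precisely $F'(b)\le1$. Hence $F'(b)=1$; a short computation shows this is equivalent to $s_3(cs_3+\theta)e^{s_3b}=s_4(cs_4+\theta)e^{s_4b}$, i.e. to stationarity in $b$ of the denominator of $K$ (so $b$ is the barrier maximizing $F(x,b)$), and solving yields $b=\tfrac{1}{s_3-s_4}\ln\tfrac{s_4(cs_4+\theta)}{s_3(cs_3+\theta)}$.

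It then remains to verify that this $V$ genuinely satisfies (2.2.1), not merely the matching conditions. Writing $a_j=cs_j+\theta$, one has $a_3=\tfrac{\lambda s_3}{\beta-s_3}>0$ and $a_4=\tfrac{\lambda s_4}{\beta-s_4}<0$, so the denominator $a_3e^{s_3b}-a_4e^{s_4b}>0$ and $K>0$. On $\{x\le b\}$: $F'''(x)=K(s_3^3e^{s_3x}-s_4^3e^{s_4x})>0$ (since $K>0$, $s_3^3>0>s_4^3$), so $F'$ is convex; using the stationarity $s_3a_3e^{s_3b}=s_4a_4e^{s_4b}$ together with $s_3a_4-s_4a_3=\theta(s_3-s_4)>0$ one computes $F''(b)<0$, whence $F'(x)\ge F'(b)+F''(b)(x-b)>1$ for $x<b$; thus the continuation branch vanishes there and $e^{-r}-V_x<0$, so the maximum in (2.2.1) is $0$. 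On $\{x>b\}$: $e^{-r}-V_x=0$, while the continuation branch $e^{-r}\big[-c+\tfrac{\lambda}{\beta}-\theta(x-b+F(b))\big]$ equals $0$ at $x=b$ (by $cF'(b)+\theta F(b)=\tfrac{\lambda}{\beta}$ with $F'(b)=1$) and is strictly decreasing in $x$, hence $\le0$; so again the maximum is $0$. Since (2.2.1) involves only first $x$‑derivatives, and $V$ is $C^1$ at $b$ with $V_r,V_{rr}$ smooth, the kink of $F''$ at $b$ is harmless.

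The main obstacle is not the ansatz, which is routine once the restricted‑dividends lemmas are in hand, but identifying the two ``hidden'' conditions — that $K$ is fixed by the $e^{\beta x}$‑cancellation (equivalently, by the equation evaluated at $x=b$) and that $b$ is fixed by the smooth fit $F'(b)=1$ (equivalently, by stationarity of the denominator of $K$) — and then the sign bookkeeping of the verification step: reality and ordering $s_3>0>s_4$, the bound $s_3<\beta$, $a_4<0$, $K>0$, and $F''(b)<0$. All of this rests on the single identity $(cs+\theta)(s-\beta)=-\lambda s$ linking the roots of the characteristic polynomial to $cs+\theta$, so it is worth isolating that identity first. One should also record the tacit assumption that the right‑hand side of the $b$‑formula is nonnegative; otherwise the optimal barrier degenerates to $b=0$, i.e. the entire surplus is paid out at once.
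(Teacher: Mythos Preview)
Your construction of $V$ follows the paper's route almost verbatim: the separated ansatz $V=e^{-r}F$, the split of the jump integral at $b-x$, the reduction to the second--order ODE via the operator $(\tfrac{d}{dx}-\beta)$, the boundary condition $F(0)=0$, and the determination of $K$ by substituting $F=K(e^{s_3x}-e^{s_4x})$ back into the undifferentiated equation and evaluating at $x=b$. So on the construction side there is no real difference.

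Where you diverge from the paper is in how $b$ is pinned down and in the verification. The paper fixes $b$ by the heuristic optimality principle ``$F(x,b)$ is maximized in $b$, so $\partial F/\partial b=0$'' and then stops, without checking that the inactive branch of the HJB is actually $\le 0$ on either region. You instead derive the smooth--fit condition $F'(b)=1$ directly from the requirement that \emph{both} branches of (2.2.1) be $\le 0$ on both sides of $b$, show that this is equivalent to stationarity of the denominator of $K$ (hence to the paper's $\partial F/\partial b=0$), and then carry out a genuine verification: $K>0$ from the signs of $a_3,a_4$; convexity of $F'$ together with $F''(b)<0$ giving $F'>1$ on $(0,b)$; and the continuation branch on $\{x>b\}$ being $\le 0$ via the identity $cF'(b)+\theta F(b)=\lambda/\beta$. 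This is more than the paper proves --- the paper's ``proof'' is really a derivation of a candidate, whereas yours actually establishes that the candidate solves (2.2.1). Your observation that everything hinges on the single relation $(cs+\theta)(s-\beta)=-\lambda s$ is also a cleaner organizing device than the paper's more ad hoc algebra.
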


\begin{proof}

We try the Ansatz: 

\begin{equation*}
	V(x,r)=e^{-r}F(x,b)
	\tag{2.2.2}\end{equation*}

As ruin is immediate if $x=0$, so

\begin{equation*}
	F(0,b)=0
	\tag{2.2.3}\end{equation*}

First we consider the case when $x>b$, in this case \cite{Avanzi,Schmidli},

\begin{equation*}
	e^{-r}-V_x=0
	\tag{2.2.4}\end{equation*}

\begin{equation*}
	F(x,b)=x-b+F(b,b)
	\tag{2.2.5}\end{equation*}

Then we consider the case when $0<x\leq b$, in this case,

\begin{equation*}
	-cV_x+\lambda\int_{0}^{\infty}[v(x+y,r)-v(x,r)]p(y)dy
	+mV_r+\frac{1}{2}\delta^2 V_{rr}=0
	\tag{2.2.6}\end{equation*}

Substitute (2.2.2) into (2.2.6), we get:

\begin{equation*}
	\begin{split}
		&cF^\prime(x,b)+(m+\lambda-\frac{\delta^2}{2})F(x,b)-\lambda\int_{0}^{b-x}F(x+y,b)p(y)dy\\
		&-\lambda\int_{b-x}^{\infty}[x+y-b+F(b,b)]p(y)dy=0
	\end{split}
	\tag{2.2.7}\end{equation*}

Notice that:
\begin{equation*}
	\int_{a}^{\infty}p(y)(y-a)dy=\int_{a}^{\infty}(1-{P}(y))dy
	\tag{2.2.8}\end{equation*}

where ${P}(y)$ is the c.d.f of $p(y)$.

Then (2.2.7) can be rewritten as

\begin{equation*}
	\begin{split}
		&cF^\prime(x,b)+(m+\lambda-\frac{\delta^2}{2})F(x,b)-\lambda\int_{x}^{b}F(u,b)p(u-x)du\\
		&-\lambda\int_{b-x}^{\infty}[1-{P}(y)]dy-\lambda F(b,b)[1-{P}(b-x)]=0
	\end{split}
	\tag{2.2.9}\end{equation*}

Substitute $p(y)=\beta e^{-\beta y}$ and ${P}(y)=1-e^{-\beta y}$, apply the operator $(\frac{d}{dx}-\beta)$, (2.2.9) becomes:

\begin{equation*}
	cF^{\prime\prime}(x,b)+(m+\lambda-\frac{\delta^2}{2}-\beta c)F^\prime(x,b)-\beta(m-\frac{\delta^2}{2})F(x,b)=0
	\tag{2.2.10}\end{equation*}

\begin{equation*}
	F(x,b)=K(e^{s_3 x}-e^{s_4 x})
	\tag{2.2.11}\end{equation*}

$s_3$, $s_4$ are respectively positive and negative solutions of the equation:

\begin{equation*}
	cx^2+(m+\lambda-\frac{\delta^2}{2}-\beta c)x-\beta(m-\frac{\delta^2}{2})=0
	\tag{2.2.12}\end{equation*}

Substitute (2.2.11) back into (2.2.9) and set $x=b$, we get

\begin{equation*}
	K=\frac{\lambda}{\beta}\times\frac{1}{(cs_3+m-\frac{\delta^2}{2})e^{s_3 b}-(cs_4+m-\frac{\delta^2}{2})e^{s_4 b}}
	\tag{2.2.13}\end{equation*}

So 

\begin{equation*}
	F(x,b)=\frac{\lambda}{\beta}\times\frac{e^{s_3 x}-e^{s_4 x}}{(cs_3+m-\frac{\delta^2}{2})e^{s_3 b}-(cs_4+m-\frac{\delta^2}{2})e^{s_4 b}}, 0\leq x\leq b
	\tag{2.2.14}\end{equation*}

$F(x,b)$ is maximized at $b$, so $\frac{\partial F}{\partial b}=0$

\begin{equation*}
	b=\frac{1}{s_3-s_4}\ln\frac{s_4(cs_4+m-\frac{\delta^2}{2})}{s_3(cs_3+m-\frac{\delta^2}{2})}
	\tag{2.2.15}\end{equation*}

In conclusion, the value function:

\begin{equation*}
	V(x,r)=e^{-r}\left\{
	\begin{array}{rcl}
		K(e^{s_3 x}-e^{s_4 x}) &&  x\leq b\\
		x-b+F(b,b) &&  x>b
	\end{array}
	\right.
	\tag{2.2.16}\end{equation*}

solves the HJB equation. 
\end{proof}

Next we need to prove a verification theorem:

\begin{theorem}
The optimal strategy $D^*$ is to pay out any capital greater than $b$, i.e. $D_t^*=\max\{\sup_{0\leq s\leq\tau\wedge t}X_s-b,0\}$, where $X_s=x-cs+S_t$ and $\tau$ denotes the ruin time under strategy $D^*$.
\end{theorem}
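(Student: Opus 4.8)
The plan is to run, in the singular–control setting, the same verification scheme used for the threshold strategy in the restricted case. Write $F$ for the function appearing in (2.2.16), so the candidate value function is $e^{-r}F(x)$, and record the two analytic facts that drive the argument. First, $F\in C^{1}((0,\infty))$: on $(b,\infty)$ we have $F(x)=x-b+F(b,b)$, hence $F'=1$ there, while the smooth–fit identity $F'(b-)=1$ follows from the equation (2.2.15) defining $b$ (using that $s_{3},s_{4}$ solve (2.2.12)). Second, $F'(x)\ge 1$ for every $x>0$: this is trivial for $x\ge b$, and on $(0,b)$ one checks from (2.2.11) that $F$ is concave, so $F'$ decreases down to the boundary value $F'(b)=1$. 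These two facts are exactly the content of the HJB inequality $e^{-r}-V_{x}\le 0$.

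Fix an arbitrary admissible strategy $D\in\psi$, denote by $\tau^{D}$ its ruin time, write $X^{D}_{t}=x-ct+S_{t}-D_{t}$, and decompose $D_{t}=D^{c}_{t}+\sum_{s\le t}\Delta D_{s}$ into continuous and pure–jump parts. Applying Itô's formula to the semimartingale $s\mapsto e^{-r-ms-\delta B_{s}}F(X^{D}_{s})$ up to $t\wedge\tau^{D}$ (no local–time term arises because $X^{D}$ has no continuous martingale component, so $F\in C^{1}$ suffices), and using $V_{r}=-e^{-r}F$, $V_{rr}=e^{-r}F$, $V_{x}=e^{-r}F'$, one finds that the Brownian and compensated–Poisson pieces form a local martingale $M_{t}$, while the remaining terms are: the ``generator'' drift $e^{-r-ms-\delta B_{s}}\big[-cF'(X^{D}_{s})+\lambda\int_{0}^{\infty}(F(X^{D}_{s}+y)-F(X^{D}_{s}))p(y)\,dy-(m-\tfrac{\delta^{2}}{2})F(X^{D}_{s})\big]\,ds\le 0$ by (2.2.1); the continuous–dividend term $-e^{-r-ms-\delta B_{s}}F'(X^{D}_{s})\,dD^{c}_{s}\le -e^{-r-ms-\delta B_{s}}\,dD^{c}_{s}$ since $F'\ge1$; and each dividend jump, contributing $F(X^{D}_{s-}-\Delta D_{s})-F(X^{D}_{s-})=-\int_{0}^{\Delta D_{s}}F'(X^{D}_{s-}-u)\,du\le-\Delta D_{s}$, again by $F'\ge1$. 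Collecting everything,
\begin{equation*}
e^{-r-m(t\wedge\tau^{D})-\delta B_{t\wedge\tau^{D}}}F(X^{D}_{t\wedge\tau^{D}})+\int_{0}^{t\wedge\tau^{D}}e^{-r-ms-\delta B_{s}}\,dD_{s}=e^{-r}F(x)+A_{t}+M_{t},
\end{equation*}
where $A_{t}\le0$ is non-increasing with $A_{0}=0$, and every intermediate inequality becomes an equality when $D=D^{*}$ (using $F'(b)=1$ on the support of the reflection local time, and $F(x)=x-b+F(b,b)$ for the initial jump when $x>b$), in which case $A_{t}\equiv0$.

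Since $F\ge0$ and $D$ is non-decreasing, the left–hand side above is non-negative, so $M_{t}\ge-e^{-r}F(x)$; a local martingale bounded below is a supermartingale, hence $\mathbb{E}[M_{t}]\le0$, and dropping the non-negative first term gives $\mathbb{E}\big[\int_{0}^{t\wedge\tau^{D}}e^{-r-ms-\delta B_{s}}\,dD_{s}\big]\le e^{-r}F(x)$. Letting $t\to\infty$ and invoking monotone convergence yields $V^{D}(x,r)\le e^{-r}F(x)$, hence $V(x,r)\le e^{-r}F(x)$. For $D=D^{*}$ the controlled surplus is reflected at $b$, so $X^{D^{*}}$ takes values in $[0,b]$ and $F(X^{D^{*}})\le F(b,b)$ is bounded; this makes $M_{t}$ a genuine martingale on each finite interval (here one uses $m>\delta^{2}/2$ to control $\mathbb{E}[e^{-2\delta B_{s}}]=e^{2\delta^{2}s}$ against $e^{-2ms}$), so $\mathbb{E}[M_{t}]=0$, while boundedness of $F$ and $\mathbb{E}[e^{-r-mt-\delta B_{t}}]=e^{-r-(m-\delta^{2}/2)t}\to0$ force the first term to vanish in expectation as $t\to\infty$. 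Passing to the limit in the identity above with $A_{t}\equiv0$ gives $V^{D^{*}}(x,r)=e^{-r}F(x)$. Combining the two bounds, $V(x,r)=e^{-r}F(x)=V^{D^{*}}(x,r)$, so $D^{*}$ is optimal.

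I expect the main obstacle to be the careful execution of the generalized Itô formula for the possibly singular dividend stream — organising the continuous and jump contributions so that both are dominated by $-dD_{s}$, which is precisely where the global bound $F'\ge1$ is essential — together with the integrability point that $F$ grows linearly on $(b,\infty)$: for an arbitrary $D$ one cannot take expectations in Itô's formula directly and must argue through the ``local martingale bounded below is a supermartingale'' device, whereas for the optimal $D^{*}$ the reflection at $b$ restores boundedness of $F(X^{D^{*}})$ and lets the transversality term vanish.
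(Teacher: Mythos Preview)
Your argument is correct and, in fact, considerably more complete than the paper's own proof. The paper only applies It\^o's formula under the specific strategy $D^{*}$ and deduces $V^{D^{*}}(x,r)=e^{-r}F(x)$; it never carries out the comparison with an \emph{arbitrary} admissible $D$, so the inequality $V^{D}(x,r)\le e^{-r}F(x)$ is asserted only implicitly through ``$V$ solves the HJB equation.'' You supply exactly the missing pieces: the analytic verification that smooth fit $F'(b-)=1$ follows from the optimality condition (2.2.15) and that concavity on $(0,b)$ forces $F'\ge1$ globally; the decomposition of a general singular $D$ into continuous and jump parts so that both are dominated by $-dD_{s}$ via $F'\ge1$; and the device of treating $M_{t}$ as a local martingale bounded below (hence a supermartingale) to bypass the integrability problem caused by the linear growth of $F$ for an uncontrolled surplus. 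The paper, by contrast, simply asserts that the stochastic integral has zero expectation because ``$F$ is bounded when $x<b$,'' which is legitimate only because under $D^{*}$ the surplus is confined to $[0,b]$ --- the very point you make explicit.

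One small remark: your parenthetical ``here one uses $m>\delta^{2}/2$ to control $\mathbb{E}[e^{-2\delta B_{s}}]=e^{2\delta^{2}s}$ against $e^{-2ms}$'' is not actually needed to make $M_{t}$ a true martingale on a \emph{finite} interval $[0,t]$ --- the integral $\int_{0}^{t}e^{-2(m-\delta^{2})s}\,ds$ is finite regardless. The hypothesis $m>\delta^{2}/2$ enters only through the transversality step $\mathbb{E}[e^{-r-mt-\delta B_{t}}]\to0$, which you already invoke separately. This is a cosmetic point and does not affect the validity of your proof.
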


\begin{proof}
	If $x>b$, the proof is obvious because $V(x,r)=e^{-r}[x-b+F(b,b)]$. If $x\leq b$, the process $D^*$ is continuous and increasing and therefore of bounded variation. According to It\^o's formula, under the strategy $D^*$:
	
	\begin{equation*}
		\begin{split}
			&e^{-r-m(t\wedge\tau^{D^*})-\delta B_{t\wedge\tau^{D^*}}} F(X_{t\wedge\tau^{D^*}}^{D^*})=e^{-r}F(x)\\+
			&\int_{0}^{t\wedge\tau^{D^*}}e^{-r-ms-\delta B_s}F^\prime(X_s^{D^*})(-c)ds\\
			&+\lambda\int_{0}^{t\wedge\tau^{D^*}}e^{-r-ms-\delta B_s}\int_{0}^{\infty}[F(X_s^{D^*}+y)-F(X_s^{D^*})]p(y)dyds\\
			&-\int_{0}^{t\wedge\tau^{D^*}}e^{-r-ms-\delta B_s}(m-\frac{\delta^2}{2})F(X_s^{D^*})ds\\
			&-\int_{0}^{t\wedge\tau^{D^*}}e^{-r-ms-\delta B_s}F^\prime(X_s^{D^*})D_s^*ds\\
			&+\delta\int_{0}^{t\wedge\tau^{D^*}}e^{-r-ms-\delta B_s}F(X_s^{D^*})dB_s\\
			&= e^{-r}F(x)-\int_{0}^{t\wedge\tau^{D^*}}e^{-r-ms-\delta B_s}D_s^*ds\\
			&+\delta\int_{0}^{t\wedge\tau^{D^*}}e^{-r-ms-\delta B_s}F(X_s^{D^*})dB_s
		\end{split}
	\end{equation*}
	
	The last step is because $V$ satisfies (2.2.1) and $D^*$ only increases at points where $x_s^{D^*}=b$, i.e., $F^\prime(X_s^{D^*})=1$. Also because $F$ is bounded when $x<b$ and $\int_{0}^{t}[\mathbb{E}[e^{-r-ms-\delta B_s}]]^2ds<\infty$, so the last stochastic integral above is a martingale with expectation 0. Taking expectation on both sides of the above equality yields:
	
	\begin{equation*}
		e^{-r}F(x)=\mathbb{E}\left[e^{-r-m(t\wedge\tau^{D^*})-\delta B_{t\wedge\tau^{D^*}}} F(X_{t\wedge\tau^{D^*}}^{D^*})+\int_{0}^{t\wedge\tau^{D^*}}e^{-r-ms-\delta B_s}D_s^*ds\right]
	\end{equation*}
	
	Then we let $t\to\infty$, $e^{-r}F(x)=V(x,r)=V^{D^*}(x,r)$ because $F(X_\tau)=0$.

\end{proof}

\section{Exponential L\'evy Process as a Discounting Factor}

In this section the discounting rate is assumed to follow an exponential L\'evy process $\exp(-r-mt-X_t)$, where $r>0$, $m>0$ $X_t$ is a L\'evy process with characteristic triplet $(\delta, \gamma, \nu)$, i.e., the characteristic function of $X_t$ has the following L\'evy-Khinchin representation: \cite{Cont}.

\begin{equation*}
	\mathbb{E}[e^{izX_t}]=\exp t\phi(z)
\end{equation*}

\begin{equation*}
	\phi(z)=-\frac{\delta^2z^2}{2}+i\gamma z+\int_{-\infty}^{+\infty}(e^{izx}-1-izx\mathbb{I}_{|x|\leq1}\nu(dx))
\end{equation*}

Its infinitesimal generator is given by:

\begin{equation*}
	Lf(x)=\frac{\delta^2}{2}\frac{\partial^2f}{\partial x^2}+\gamma\frac{\partial f}{\partial x} +\int\nu(dy)[f(x+y)-f(x)-y\mathbb{I}_{|y|\leq1}\frac{\partial f}{\partial x}(x)]
\end{equation*}

Here $\delta>0$ and $\gamma$ are real constants and $\nu$ are p.d.f of jump size $y$.

\subsection{Restricted Case}

The HJB equation corresponding to this case is:

\begin{equation*}
	\begin{split}
		-cV_x+\lambda\int_{0}^{\infty}[V(x+y,r)-V(x,r)]p(y)dy+(m+\gamma)V_r+\frac{\delta^2}{2}V_{rr}\\
		+\int_{0}^{\infty}[V(x,r+z)-V(x,r)-z\mathbb{I}_{\{|z|\leq 1\}}V_r]\nu(z)dz+\sup_{0\leq u\leq\xi}u(e^{-r}-V_x)
	\end{split}
	\tag{3.1.1}\end{equation*}

Proceeding like Sec. 2.1, substitute $V(x,r)=e^{-r}F(x)$ into (3.1.1), we have to solve the following two equations:

\begin{equation*}
	\begin{split}
		-cF_1^\prime(x)-(m+\lambda+\nu+1-\frac{\delta^2}{2}-k-l)F_1(x)\\
		+\lambda \int_{0}^{\infty}F_1(x+y)p(y)dy+\xi(1-F_1^\prime(x))=0
	\end{split}
	\tag{3.1.2}\end{equation*}

\begin{equation*}
	\begin{split}
		&-cF_2^\prime(x)-(m+\lambda+\nu+1-\frac{\delta^2}{2}-k-l)F_2(x)\\
		&+\lambda[\int_{0}^{\hat{x}-x}F_2(x+y)p(y)dy+\int_{\hat{x}-x}^{\infty}F_1(x+y)p(y)dy]=0
	\end{split}
	\tag{3.1.3}\end{equation*}

where $k=\int_{0}^{\infty}e^z\nu(z)dz$ and $l=\int_{0}^{\infty}z\mathbb{I}_{\{|z|\leq 1\}}\nu(z)dz$ are two constants.

$\hat{x}$ is a boundary point that

\begin{equation*}
	F(x)=\left\{
	\begin{array}{rcl}
		F_1(x) &&  x>\hat{x}\\
		F_2(x) &&  x\leq\hat{x}
	\end{array}
	\right.
	\tag{3.1.4}\end{equation*}

Now if we set $\bar{m}=m+\nu+1-k-l$, our results would be the same as Sec. 2.1, just replacing the $m$ in Sec. 2.1 by $\bar{m}$. The precondition is that $\bar{m}-\frac{\delta^2}{2}>0$. 

\subsection{Unrestricted Case}

The HJB equation corresponding to this case is:

\begin{equation*}
	\begin{split}
		&\max\{-cV_x+\lambda\int_{0}^{\infty}[V(x+y,r)-V(x,r)]p(y)dy+(m+\nu)V_r+\frac{1}{2}\delta^2V_{rr}\\
		&+\int_{0}^{\infty}[V(x,r+z)-V(x,r)-z\mathbb{I}_{\{|z|\leq 1\}}V_r]\nu(z)dz; \ \ e^{-r}- V_x\}=0
	\end{split}
	\tag{3.2.1}\end{equation*}

Similar to Sec. 3.1, if we set $\bar{m}=m+\nu+1-k-l$, our results would be the same as Sec. 2.2, just replacing the $m$ in Sec. 2.2 by $\bar{m}$. The precondition is that $\bar{m}-\frac{\delta^2}{2}>0$.


\section{Conclusion}

From the discussions above, we find the optimal dividends strategies in dual risk models under stochastic interest rates, assuming the discounting factor is a geometric Brownian motion or exponential L\'evy process.  The strategies are analogous to the ones discussed by Eisenberg \cite{Eisenberg}. In restricted case, where the dividends payment is bounded, the optimal strategy is a threshold strategy. In unrestricted case, where the dividends payment is unbounded, the optimal strategy is a barrier strategy. In both cases, closed form solutions have been obtained.
\\
\\
{\bf Acknowledgments}\\

The author would like to express his special thanks to his advisor, Dr. Lingjiong Zhu, for his many useful comments and suggestions on this research.\\


\end{document}